\theoremstyle{plain}
\newtheorem{theorem}{Theorem}
\newtheorem{lemma}[theorem]{Lemma}
\newtheorem{corollary}[theorem]{Corollary}
\theoremstyle{definition}
\newtheorem{example}[theorem]{Example}
\title{The Exact Query Complexity of Yes-No Permutation Mastermind}
\author[1]{Mourad El Ouali}
\author[2]{Volkmar Sauerland}
\affil[1]{Polydisciplinary Faculty Ouarzazate, University Ibn Zohr, Agadir, Morocco; elouali@math.uni-kiel.de}
\affil[2]{Department of Mathematics, Kiel University, Kiel, Germany; sauerland@math.uni-kiel.de}
\date{}
\begin{document}

\maketitle

\begin{abstract}
Mastermind is famous two-players game.
The first player (\emph{codemaker}) chooses a secret code
which the second player (\emph{codebreaker}) is supposed to crack
within a minimum number of code guesses (queries).
Therefore, codemaker's duty is to help codebreaker by providing
a well-defined error measure between the secret code and the guessed code
after each query.
We consider a variant, called Yes-No AB-Mastermind,
where both secret code and queries must be repetition-free
and the provided information by codemaker only indicates
if a query contains any correct position at all.
For this Mastermind version with $n$ positions and $k\ge n$ colors
and $\ell:=k+1-n$ we prove a lower bound of $\sum_{j=\ell}^{k}\log_2 j$
and an upper bound of $n\log_2 n+k$ on the number of 
queries necessary to break the secret code.
For the important case $k=n$,
where both secret code and queries represent permutations,
our results imply
an exact asymptotic complexity of $\Theta(n \log n)$ queries.
\end{abstract}

\section{Introduction}
Mastermind is a popular two-player board game invented in 1970
by Mordecai Meirowitz.
Its idea is that a codemaker chooses a secret code of fixed length $n$,
where each position is selected from a set of $k$ colors.
The second player, codebreaker, has to identify the secret code
by a finite sequence of corresponding code guesses (queries),
each of which is replied with the number of matching positions
and the number of further correct colors.
The original game is played by picking pegs of $k=6$ different colors
and placing them into rows with $n=4$ holes,
where the number of rows (allowed queries) for codebreaker is limited.

Generalizing the situation to arbitrary many positions and colors,
codemaker selects a vector $y\in {[k]}^n$
and codebreaker gives in each iteration a query
in form of a vector $x\in {[k]}^n$.
In the original setting,
codemaker's reply is the so called \emph{black-white} error measure,
consisting of a pair of numbers,
where the first number, $\mathrm{black}(x,\,y)$,
is the number of positions in which $x$ and $y$ coincide
and the second number, $\mathrm{white}(x,\,y)$,
is the number of additional colors which appear in both x and y,
but at different positions.
In this paper, we consider a variant, called Yes-No AB-Mastermind,
which is defined by the following properties
\begin{itemize}
\item Both secret code and queries must be repetition-free.
This property is indicated by the prefix AB
and stems from the AB game, better known as \enquote{Bulls and Cows},
which was even known prior to the commercial version of Mastermind
with color repetitions.
\item The provided information by codemaker only answers the question
whether or not a query contains any correct position at all.
This property is introduced by us and referred to by the term \enquote{Yes-No}.
\end{itemize}

{\bf Related Works:}
Mastermind and its variants have been analyzed under different aspects.
One of the first analysis of the commercial version
with $n=4$ positions and $k=6$ colors is due to Knuth~\cite{K77}
and shows that each code can be cracked in at most $5$ queries.
Even before the appearance of Mastermind as a commercial game
Erd\H{o}s and R\'enyi~\cite{ER63} analyzed the asymptotic query complexity
of a similar problem with two colors in 1963.
After Knuth's analysis of the commercial game,
many different variants of Mastermind
with arbitrary code length $n$ and number of colors $k$
have been investigated.
For example, Black-Peg Mastermind restricts its error measure
between two codes $x$ and $y$ to the single value $\mathrm{black}(x,y)$
(i.e., the exact number of positions where both codes coincide).
This version was introduced in 1983 by Chv\'atal~\cite{C83} for the case $k=n$,
who provides a deterministic adaptive strategy
using $2n\lceil\log_2 k\rceil + 4n$ queries.
Improved upper bounds for this variant and arbitrary $n$ and $k$
where given by Goodrich~\cite{Goo09b}
and later by J\"ager and Peczarski~\cite{JP11}
but remained in the order of $\mathcal{O}(n\log_2 n)$.
Doerr et al.~\cite{DDST16} provided a randomized codebreaker strategy
that only needs $\mathcal{O}(n \log\log n)$ queries in expectation
and showed that this asymptotic order even holds
for up to $n^2\log\log n$ colors,
if both black and white information is allowed.
A first upper bound for AB Mastermind
was given by Ker-I Ko and Shia-Chung Teng~\cite{KT86} for the case $k=n$,
i.e., secret code and queries represent permutations of $[n]$.
Their non-constructive strategy yields an upper bound of $2n\log_2 n + 7n$
queries.
A constructive strategy by El Ouali and Sauerland~\cite{ES13}
reduced this upper bound by a factor of almost $2$
and also included the case $k>n$ of \emph{Black-Peg} AB-Mastermind.
The term Black-Peg labels the situation that the error measure
between secret code and queries is only the \enquote{black} information,
i.e.\ the number of coinciding positions,
while the \enquote{white} information (see above) is omitted.
El Ouali et al.~\cite{EGSS18} combined the upper bound
with a lower bound of $n$ queries,
which is implied by a codemaker (cheating) strategy.
It improved the lower bound of $n-\log\log n$
by Berger et al~\cite{BCS16}.
However, a gap between $\Omega(n)$ and $\mathcal{O}(n\log_2 n)$
remains for this Mastermind variant.
Some facts indicate that closing this gap means to improve both bounds.
On the one hand, a careful consideration of the partition of the remaining
searchspace with respect to all possible codemaker replies might yield
a refined codemaker (cheating) strategy and possibly increase the lower bound.
On the other hand, overcoming the sequential learning process of the
codebreaker's binary search strategy might decrease the upper bound.
The latter presumption is reinforced by results of
Afshani et al.~\cite{AAD+19},
who consider another permutation-based variant of Mastermind.
There, the secret code is a combination of a binary string and a permutation,
(both of length $n$), queries are binary strings of length $n$,
and the error measure returns the number of leading coincidences
in the binary string with respect to the order of the permutation.
For this setting, which is also a generalization of the
popular \emph{leading ones} test problem in black box optimization,
the authors prove an exact asymptotic query complexity of $\Theta(n\log n)$
for deterministic strategies but a randomized query
complexity of  $\Theta(n\log\log n)$.

One of the ultimate goals in the analysis of Mastermind variants
is to prove the exact asymptotic query complexity.
As mentioned above, closing the asymptotic gap
between the lower $\Omega(n)$ bound
and the upper $\mathcal{O}(n\log_2 n)$ bound
is an unsolved problem for Black-Peg AB Mastermind.
A related open question is whether the same asymptotic
number of queries is required for both,
(Black-Peg) Mastermind with color repetition and
(Black-Peg) AB Mastermind.

{\bf Our Contribution:}
We consider a new variant of AB-Mastermind
which is more difficult to play for codebreaker
since the error-measure provided by codemaker is
less informative.
Here, for a secret code $y$ the answer $\mathrm{info}(\sigma,y)$
to a query $\sigma$ is \enquote{yes}
if some of its positions coincide with the secret code,
otherwise the answer is \enquote{no}.
We first analyze the worst-case performance of query strategies
for this Mastermind variant
and give a lower bound of $\sum_{j=\ell}^{k}\log_2 j$ queries for $k\ge n$,
which becomes $n\log_2 n-n$ in the case $k=n$.
The lower bound even holds
if codebreaker is allowed to use repeated colors in his queries.
We further present a deterministic polynomial-time algorithm
that identifies the secret code.
This algorithm is a modification of the constructive strategy
of El Ouali et al.~\cite{EGSS18}.
It returns the secret code in at most $(n-3)\log_2 n+\frac{5}{2}n-1$ queries
in the case $k=n$
and in less than $(n-2)\log_2 n+k+1$ queries in the case $k>n$.
For the important case $k=n$, our results imply the exact
asymptotic query complexity of $\Theta(n\log_2 n)$.
Since the considered \enquote{Yes-No} error measure
implies a new variant of AB-Mastermind,
there is no previous reference to compare our results to. 
 
\section{Results}
\subsection{Lower Bound on the Number of Queries}\label{sec:lb}
  To simulate the worst case, we allow codemaker to \enquote{cheat}
  in a way that after every query he may decide for a new secret code
  concerning the answers given so far.
\begin{theorem}
  Let $k,n\in\mathbb{N}$, $k\ge n$ and $\ell:=k+1-n$.
  Every strategy for Yes-No AB-Mastermind needs at least
  $
    \sum\limits_{j=\ell}^{k}\log j
  $
  queries in the worst case.
\end{theorem}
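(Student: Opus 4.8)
The plan is to run a cheating-codemaker (adversary) argument and track the number of secret codes that remain consistent with the answers given so far; the claimed bound will then fall out as the information-theoretic limit imposed by the fact that each answer is a single bit. First I would count the admissible secrets: since the secret is a repetition-free vector of length $n$ over $[k]$, there are exactly
\[
  N \;=\; k(k-1)\cdots(k-n+1)\;=\;\prod_{j=\ell}^{k} j
\]
of them, where $\ell=k+1-n$ is the smallest factor. Because $\log_2 N=\sum_{j=\ell}^{k}\log_2 j$ is precisely the target expression, the theorem is equivalent to showing that no strategy can pin down the secret in fewer than $\log_2 N$ queries.

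Next I would set up the codemaker's cheating strategy. Let $S$ denote the set of codes still consistent with every answer given so far, with $S$ initially the set of all $N$ codes. When the codebreaker issues a query $\sigma$, I would partition $S$ into the set $S_{\mathrm{yes}}$ of codes sharing at least one position with $\sigma$ and the complementary set $S_{\mathrm{no}}$, and let the codemaker answer with whichever label corresponds to the larger part, replacing $S$ by that part. Two properties need checking here, and they are the routine-but-essential heart of the argument: first, the answer is always consistent, because the retained part is non-empty, so a genuine secret realizing all answers still exists; second, each query can shrink $S$ by a factor of at most two, i.e.\ the new set has size at least $\lceil |S|/2\rceil$. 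Crucially, this step uses nothing about $\sigma$ beyond the answer being a single bit, which is exactly why the bound will also survive when the codebreaker is allowed to repeat colors in its queries.

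Finally I would close the argument with the halving estimate. Since $|S|$ starts at $N$ and is multiplied by at least $1/2$ per query, after $q$ queries we have $|S|\ge N/2^{q}$. The codebreaker can be certain of the secret only once $|S|=1$; as long as $q<\log_2 N$ we have $|S|\ge N/2^{q}>1$, so $|S|\ge 2$ and at least two codes remain consistent, meaning the secret has not been identified. Hence every strategy needs $q\ge \log_2 N=\sum_{j=\ell}^{k}\log_2 j$ queries in the worst case.

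The thing to be careful about is the precise stopping condition: one must argue that \enquote{the secret has been broken} genuinely forces $|S|=1$ (and handle the off-by-one around the final query), so that the adversary really can keep two candidates alive until step $\lceil\log_2 N\rceil$. This bookkeeping is made transparent by the halving invariant, and I do not expect a genuine combinatorial obstacle here — the entire content is the clean majority/entropy argument enabled by the binary \enquote{Yes-No} answer.
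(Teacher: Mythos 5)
Your proposal is correct and follows essentially the same route as the paper's own proof: a majority-vote cheating codemaker that keeps the larger of the two answer classes, combined with the halving invariant $|S|\ge N/2^{q}$ and the count $N=\prod_{j=\ell}^{k}j$ of repetition-free secrets. The paper's argument is exactly this, so no further comparison is needed.
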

\begin{proof}
  We give a codemaker (cheating) strategy that implies the lower bound.
  For $i\in\mathds{N}$ let $M_i$ denote the set of secrets
  that are still possible after the $i$-th query has been answered,
  starting with $M_0:=\{y\in[k]^n\,|\, \forall i\ne j\in[n]: y_i\ne y_j\}$.
  Let $M_i^{\mathrm{yes}}\subset M_i$ be the set of secrets
  that lead to a yes-answer to the $(i+1)$-th query
  and $M_i^{\mathrm{no}}:=M_i\setminus M_i^{\mathrm{yes}}$
  the set of secrets that lead to a no-answer.
  The strategy of codemaker in round $i+1$ is as follows:
  \begin{itemize}
  \item If $|M_i^{\mathrm{yes}}|\geq|M_i^{\mathrm{no}}|$,
    pick a secret from $M_i^{\mathrm{yes}}$
    (and give the answer yes)
  \item Otherwise pick a secret from $M_i^{\mathrm{no}}$
    (and give the answer no)
  \end{itemize}
  By using this strategy, codemaker achieves for every round $i$ that
  \[
    |M_i|=|M_i^{\mathrm{yes}}|+|M_i^{\mathrm{no}}|
     \leq 2\max(|M_i^{\mathrm{yes}}|,|M_i^{\mathrm{no}}|)=2|M_{i+1}|.
  \]
  This implies
  $
    |M_i|\geq 2^{-i}|M_0|
  $. So, for any $i<\log_2(|M_0|)$ we have
  \[
    |M_i|>2^{-\log_2(|M_0|)}|M_0|=\frac{|M_0|}{|M_0|}=1,
  \]
  which means that there are still at least two possible secrets left.
  Since
  \[
    \log_2(|M_0|)=\log_2\left(\prod_{j=\ell}^{k}j\right)=\sum_{j=\ell}^{k}\log_2 j,
  \]
  we obtain the claimed lower bound.
\end{proof}
\begin{corollary}
  Every strategy for Yes-No Permutation-Mastermind (the case $k=n$)
  needs at least
  \[
    \sum\limits_{j=1}^{n}\log_2 j\ge n\log_2 n - n
  \]
  queries in the worst case.
\end{corollary}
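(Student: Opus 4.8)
The plan is to read the corollary off the preceding theorem and then simplify the resulting sum with an elementary factorial estimate. For the first inequality I would simply substitute $k=n$ into the theorem. Then $\ell=k+1-n=1$, so the general worst-case lower bound $\sum_{j=\ell}^{k}\log_2 j$ specializes to $\sum_{j=1}^{n}\log_2 j$, which is exactly the left-hand side of the claim; no further argument is needed for this part.

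The substance of the corollary is therefore the second inequality, a lower bound on $\sum_{j=1}^{n}\log_2 j=\log_2(n!)$. The natural approach is to compare the sum to an integral: since $\log_2$ is increasing, $\sum_{j=1}^{n}\log_2 j\ge\int_{1}^{n}\log_2 x\,dx$, and evaluating the right-hand side gives a closed form of the shape $n\log_2 n-c\,n+(\text{lower-order terms})$. Equivalently, one can invoke Stirling's bound $n!\ge (n/e)^n$ and take $\log_2$ of both sides. Either route reduces the claim to controlling the coefficient of the linear term.

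The one point that needs care is precisely this linear constant, and I expect it to be the only real obstacle. The elementary factorial bound $n!\ge(n/e)^n$ yields $\log_2(n!)\ge n\log_2 n-n\log_2 e$, so the method most directly delivers a bound of the form $n\log_2 n-\Theta(n)$; I would check carefully whether the intended constant on the linear term is $1$ or $\log_2 e$ before committing to the exact inequality, since the factorial estimate naturally produces the latter. In any case the leading term is $n\log_2 n$, so combined with the upper bound stated in the introduction this pins down the exact asymptotic query complexity $\Theta(n\log_2 n)$ for the permutation case, which is the real payoff of the corollary.
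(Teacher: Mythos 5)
Your reading of the corollary is exactly the paper's (implicit) one: there is no separate proof in the paper, and the statement is meant to follow by substituting $k=n$, hence $\ell=k+1-n=1$, into the preceding theorem, and then bounding $\sum_{j=1}^{n}\log_2 j=\log_2(n!)$ from below by a standard factorial estimate. Your first part is complete and correct.

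Your hesitation about the linear constant is not a mere caveat to be checked --- it is decisive, and you should trust your calculation over the printed statement. The inequality $\log_2(n!)\ge n\log_2 n-n$ as it appears in the corollary is \emph{false} for every $n\ge 6$: both the integral comparison and Stirling's bound $n!\ge (n/e)^n$ give $\log_2(n!)\ge n\log_2 n-n\log_2 e$ with $\log_2 e\approx 1.443$, and since in fact $\log_2(n!)=n\log_2 n-n\log_2 e+O(\log n)$, no argument can improve the linear constant to $1$. Concretely, for $n=8$ one has $\log_2(8!)=\log_2 40320\approx 15.30$, while $8\log_2 8-8=16$. The error almost certainly comes from a change of logarithm base: the inequality $\ln(n!)\ge n\ln n-n$ (natural logarithm) is true, but dividing it by $\ln 2$ yields $\log_2(n!)\ge n\log_2 n-n\log_2 e$, not $n\log_2 n - n$. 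So the corollary should be stated with $n\log_2 n-n\log_2 e$ on the right-hand side (or in natural logarithms). None of this affects the payoff you identify: the lower bound is still $n\log_2 n-O(n)=\Omega(n\log n)$, which together with the upper bound of Theorem~\ref{theo:2} gives the exact asymptotic complexity $\Theta(n\log_2 n)$.
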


\subsection{Upper Bound on the Number of Queries}\label{sec:ub}
\begin{theorem}\label{theo:2}
  Let $k,n\in\mathbb{N}$, $k\ge n$ and $\ell:=k+1-n$.
  For $k=n$, there is a strategy for Yes-No AB-Mastermind that
  identifies every secret code in at most
  $(n-3)\log_2 n+\frac{5}{2}n-1$ queries
  and for $k>n$, there is a strategy
  that identifies every secret code in less than
  $(n-2)\log_2 n+k+1$ queries.
\end{theorem}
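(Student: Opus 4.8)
The plan is to exhibit an explicit adaptive codebreaker strategy, adapted from the Black-Peg strategy of El Ouali et al.~\cite{EGSS18}, that determines the secret $y$ essentially one position at a time and to bound its query count by the two claimed closed forms. I would organize the strategy into three stages: a short \emph{bootstrapping} stage that locates a few colours to serve as reliable non-matching fillers, a \emph{main} stage that resolves the bulk of the positions by a binary search over the still-admissible colours, and a \emph{cleanup} stage that settles the last positions directly (and, when $k>n$, certifies the colours that never occur in $y$). The target counts already dictate the coarse bookkeeping: about $n$ positions each costing roughly $\log_2 n$ queries give the leading term, while the additive $\tfrac52 n-1$ (resp.\ $k+1$) must be squeezed out of the stages where the logarithmic rate breaks down.

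The heart of the argument is the main stage. For each not-yet-solved position I would maintain the set of colours still consistent with all previous answers, and for the currently active position $i$ perform a binary search over its candidate set of size $m_i$. Each comparison in this search must be realized by a single yes/no query, which I would build by filling every inactive position with a colour already known to be wrong there, so that the contribution of the inactive positions is fully controlled and the answer reports on the chosen half of the active candidate set. Keeping such a query repetition-free reduces to selecting distinct excluded colours for the inactive positions together with the tested colour at position $i$; I would secure this by a Hall-type system-of-distinct-representatives argument, which becomes available exactly once the bootstrapping stage has supplied enough excluded colours at every position. Each active position is then resolved in at most $\lceil\log_2 m_i\rceil\le\log_2 n$ queries, and because $m_i$ shrinks as colours are fixed, the later positions are strictly cheaper; summing over the positions yields the leading $(n-3)\log_2 n$ (resp.\ $(n-2)\log_2 n$) and pins down which positions must be handled separately.

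For the cleanup stage I would note that once only a constant number of positions remain, their candidate sets are so small that each is fixed by $O(1)$ queries—and the very last position is free, being forced as the unique unused colour—so these positions are removed from the $\log_2 n$-count, explaining the offsets $-3$ and $-2$ and feeding the linear slack $\tfrac52 n-1$. In the case $k>n$ I would interleave a scan certifying, for each of the $\ell-1=k-n$ surplus colours, that it does not appear in $y$; charging one query per surplus colour together with the per-position fixing steps produces the additive $+k+1$. I would then add the three stages, verify that the accumulated constants collapse to the two stated expressions, and confirm polynomial running time, since every query and every Hall-type filling is computable from the recorded answers. Matching these bounds against the lower bound established above gives the exact $\Theta(n\log n)$ order for $k=n$.

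The step I expect to be the main obstacle is extracting close to a full bit of binary-search information from a single yes/no answer. Since each position holds only one colour, naively isolating the active position turns binary search into linear search, so the savings must instead come from sharing each query across positions while still keeping a \emph{yes} attributable to the intended half of the active candidate set; designing the query gadget that accomplishes this, and proving it correct under the repetition-free constraint via the Hall condition, is the delicate core. Equally delicate is making the sharing tight enough that the additive overhead stays at $\tfrac52 n$ rather than absorbing another $\Theta(n\log n)$ term: the bootstrapping stage, where few blockers exist, and the hand-off to the cleanup stage are precisely where the constants $-3$, $\tfrac52 n-1$, and $+k+1$ are won or lost.
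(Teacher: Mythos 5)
Your proposal has a genuine gap at exactly the point you flag as the ``delicate core'': the bit-extraction gadget is never constructed, and the variant you sketch --- binary search over the set of \emph{candidate colours} at a fixed active position, with all inactive positions filled by known-wrong colours --- cannot be realized by single yes/no queries. A repetition-free query commits to exactly one colour at the active position, so once the inactive positions are neutralized, the answer can only confirm or refute that one colour: this is linear search, as you yourself observe, and no Hall-type argument about distinct fillers repairs it, because the obstruction is informational, not combinatorial. The paper's resolution is structurally different: the binary search runs over \emph{positions}, not colours. The initial queries are the $n$ (resp.\ $k$) cyclic shifts of the identity, which guarantees that (after the first position is found) there is always a consecutive pair $\sigma^j,\sigma^r$ such that $\sigma^j$ contains a correct unidentified position and $\sigma^r$ does not. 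A hybrid query formed from a prefix of $\sigma^j$ and the aligned remainder of $\sigma^r$ (plus, for $k=n$, an already-identified ``pivot'' colour inserted at the split point) answers, in one bit, whether the correct open position of $\sigma^j$ lies in that prefix; repetition-freeness is automatic from the shift relation $\sigma^j_i=\sigma^r_{i+1}$, with no SDR argument needed. A second gadget you would also need but do not supply (the paper's Algorithm infoP) handles the fact that already-identified positions force a spurious ``yes'': it reasks the query with the coinciding positions deranged or swapped, at a cost of at most two real queries per logical query.

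Two further consequences of the colour-based framing would break your stated bounds even if a gadget existed. For $k>n$ the candidate sets have size up to $k$, so per-position binary search would cost $\log_2 k$ rather than $\log_2 n$, overshooting the claimed $(n-2)\log_2 n+k+1$; the paper avoids this because searching over the $n$ positions of a reference query is intrinsically a $\log_2 n$ process, and the additive $+k$ comes from the $k-1$ initial shifted queries, not from a certification scan. Moreover, your proposed scan ``one query per surplus colour certifying it does not appear in $y$'' is not implementable: a single yes/no query places a colour at one position only, so a ``no'' excludes it there but nowhere else; certifying global absence of a colour is not a one-query task, and in fact the paper never needs such certification --- surplus colours are simply never fixed. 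In short, your bookkeeping of the constants is plausible, but the two mechanisms that make the count achievable (position-wise binary search between a ``yes'' query and an adjacent ``no'' query, and the infoP rearrangement trick) are the substance of the proof, and both are absent.
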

\begin{corollary}
The exact asymptotic query complexity of Yes-No Permutation-Mastermind
is $\Theta(n\log_2 n)$.
\end{corollary}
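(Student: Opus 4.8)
The plan is to obtain the statement as a direct corollary of the two bounds already established for the case $k=n$, observing that they sandwich the worst-case query complexity from both sides, in fact with matching leading coefficient. No new machinery is needed; the entire content is a comparison of the two asymptotic estimates.

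First I would handle the lower bound. Setting $\ell=k+1-n=1$, the Corollary to the lower-bound theorem asserts that every codebreaker strategy needs at least $\sum_{j=1}^{n}\log_2 j=\log_2(n!)$ queries in the worst case. By Stirling's approximation, $\log_2(n!)=n\log_2 n-n\log_2 e+O(\log_2 n)$, so this quantity is $\Omega(n\log_2 n)$; hence no strategy can identify the secret permutation in fewer than $\Omega(n\log_2 n)$ queries.

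Next I would handle the upper bound. Theorem~\ref{theo:2} in the case $k=n$ exhibits an explicit strategy identifying every secret in at most $(n-3)\log_2 n+\tfrac{5}{2}n-1$ queries. Since the dominant term is $(n-3)\log_2 n$ and the remaining terms are linear in $n$, this bound is $O(n\log_2 n)$.

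Combining the two estimates, the worst-case query complexity of Yes-No Permutation-Mastermind is simultaneously $\Omega(n\log_2 n)$ and $O(n\log_2 n)$, i.e.\ $\Theta(n\log_2 n)$. I do not expect any real obstacle here: the difficulty resides entirely in the two preceding theorems, and this final step is a routine sandwich argument. The one point worth highlighting is that the coefficient of the $n\log_2 n$ term agrees (it equals $1$) in both bounds, so the two estimates pin down the complexity up to an additive $\Theta(n)$ gap, which is stronger than the mere matching of multiplicative constants required for the $\Theta$ claim.
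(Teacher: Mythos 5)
Your proposal is correct and matches the paper's (implicit) reasoning exactly: the paper states this corollary without proof precisely because it follows by sandwiching the worst-case complexity between the lower bound $\sum_{j=1}^{n}\log_2 j=\Omega(n\log_2 n)$ and the upper bound $(n-3)\log_2 n+\tfrac{5}{2}n-1=O(n\log_2 n)$ from Theorem~\ref{theo:2}. Your use of Stirling's formula is a careful (indeed slightly more accurate) justification of the $\Omega(n\log_2 n)$ step than the paper's own stated estimate $n\log_2 n-n$, but the argument is the same routine combination of the two bounds.
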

The proof of Theorem~\ref{theo:2} resembles the proof of 
a corresponding result concerning Black-Peg AB-Mastermind~\cite{EGSS18},
except that the information whether a given query
contains a \emph{correct but unidentified} position
is not derived directly but requires special querying
outlined by Algorithm~\ref{infoP} below.
In a nutshell (summarizing with regard to both cases $k=n$ and $k>n$),
the strategy consists of $k$ initial queries
which are the first $n$ positions of shifted versions
of the vector ${(j)}_{j\in[k]}$.
From the answers of the initial queries,
we will be able to learn the secret code position-wise,
keeping record about the positions that have already been identified.
As long as there are consecutive initial queries $a$ and $b$
with the property that $a$ coincides with the secret code in at least one
yet unidentified position but $b$ does not,
we can apply a binary search for the next unidentified
position in $a$, using $\mathcal{O}(\log_2 n)$ further queries.
Such initial queries $a$ and $b$ exist ever after one
(usually after zero) but not all positions of the secret code
have been identified.
\begin{proof}[Proof of Theorem~\ref{theo:2}]
\textbf{The case $k=n$:}
We give a constructive strategy that identifies the positions
of the secret code $y\in {[n]}^n$ one-by-one.
In order to keep record about identified positions of the secret code
we deal with a partial solution vector $x$ that satisfies $x_i\in\{0,y_i\}$
for all $i\in[n]$.
We call the non-zero positions of $x$ \emph{fixed}
and the zero-positions of $x$ \emph{open}.
The fixed positions of $x$
are the identified positions of the secret code.
Remember, that for a query $\sigma=(\sigma_1,\dots,\sigma_n)$
we denote by
\[
		\mathrm{info}(\sigma,y):=\begin{cases}
			\mathrm{yes} & \text{if}\;\; \{i\in[n]\,|\,\sigma_i=y_i\}\ne\emptyset\\
				\mathrm{no} & \text{otherwise}
		\end{cases}
\]
the information
if there is some position in which $\sigma$ coincides with the secret code $y$.
For Yes-No AB-Mastermind the related information whether a query $\sigma$
contains a \emph{correct but unidentified} position
can not always be derived directly
but must be obtained by guessing one or two modifications of $\sigma$,
rearranging those positions that coincide with the partial solution $x$.
The required query procedure is summarized as Algorithm~\ref{infoP}.
\begin{algorithm}[h]
\SetKwInOut{Input}{input}\SetKwInOut{Output}{output}
\Input{Query $\sigma$, partial solution $x$ and secret code $y$}
\Output{Information whether $\sigma$ contains a correct unidentified position}
\lIf{$\sigma$ and $x$ do not coincide}
{$\mathrm{answer} := \mathrm{info}( \sigma, y )$}
\ElseIf{$\sigma$ and $x$ coincide in more than one position}
{
    Let $I\subseteq [n]$ be the set of indices where $\sigma$ and $x$ coincide\;
    Let $\pi:I\rightarrow I$ be a derangement (a permutation without any fixed position)\;
    Obtain guess $\rho$ from $\sigma$ by replacing $\sigma_i$ with $\sigma_{\pi(i)}$ for all $i\in I$\;
    $\mathrm{answer} := \mathrm{info}( \rho, y )$\;
}
\Else%
{
    Let $i$ be the unique index with $\sigma_i=x_i$\;
	\If{$x$ has more then one non-zero position}
    {
        Let $j\ne i$ be another index with $x_j\ne 0$\;
        Obtain guess $\rho$ from $\sigma$ by swapping positions $i$ and $j$\;
        $\mathrm{answer} := \mathrm{info}( \rho, y )$\;
    }
    \Else%
    {
	Choose $j_1\ne i$ and $j_2\ne i$ with $j_1\ne j_2$\;
        Obtain guess $\rho_1$ from $\sigma$ by swapping positions $i$ and $j_1$\;
        Obtain guess $\rho_2$ from $\sigma$ by swapping positions $i$ and $j_2$\;
	\lIf{$\mathrm{info}( \rho_1, y ) = \mathrm{info}( \rho_2, y ) = \mathrm{no}$}{$\mathrm{answer} := \mathrm{no}$}
        \lElse{$\mathrm{answer} := \mathrm{yes}$}
    }
}
\textbf{return} $\mathrm{answer}$\;
\caption{Function infoP}\label{infoP}
\end{algorithm}

\begin{example}
Figure~\ref{fig:1} illustrates the four distinct cases that are considered
by $\mathrm{infoP}$.
In the first and easiest case (panel (a)) the actual query $\sigma$
does not coincide with the partial solution $x$.
Thus, $\sigma$ contains a correct unidentified position
if and only if it contains a correct position at all,
i.e., $\mathrm{infoP}(\sigma,x,y)=\mathrm{info}(\sigma,y)$.
In the second case (panel (b)),
$\sigma$ and $x$ coincide in more than one position,
namely the positions with colors $3$, $9$ and $10$.
The modified query $\rho$ is obtained from $\sigma$ by rearranging
these positions in a way that all identified positions
get a wrong color
while leaving all open positions of $\sigma$ unchanged.
This implies that $\mathrm{infoP}(\sigma,x,y)=\mathrm{info}(\rho,y)$.
Panels (c) and (d) deal with the case that $\sigma$ and $x$ coincide
in exactly one position, say $i$.
If $x$ already contains a further non-zero position $j$,
we obtain $\rho$ from $\sigma$ by swapping positions
$i$ and $j$ in $\sigma$ (the positions with colors $3$ and $5$ in panel (c)).
Again, we obtain that $\mathrm{infoP}(\sigma,x,y)=\mathrm{info}(\rho,y)$.
Finally, if position $i$ is the only yet identified position
of the secret code we have to ask two different modified queries
to derive $\mathrm{infoP}(\sigma,x,y)$ (panel (d)).
We obtain the two queries $\rho_1$ and $\rho_2$,
each by swapping the identified position (here $3$)
with another position in $\sigma$, (here with $1$ and $2$, respectively).
While the color of the identified position is wrong in both modifications
$\rho_1$ and $\rho_2$, 
every other position of $\sigma$ coincides with the corresponding position
of at least one modification.
Therefore, $\mathrm{infoP}(\sigma,x,y)=\mathrm{no}$ if and only if
$\mathrm{info}(\rho_1,y)=\mathrm{info}(\rho_2,y)=\mathrm{no}$.
\begin{figure}
\begin{center}
\begin{tikzpicture}
\draw(-1.4,-0.25)node{(a)};
\draw(-0.2,0.0)node{$x$};
\draw(1.0,0.0)node{$\bullet$}; \draw(2.0,0.0)node{$\bullet$}; \draw(3.0,0.0)node{3}; \draw(4.0,0.0)node{$\bullet$}; \draw(5.0,0.0)node{$\bullet$}; \draw(6.0,0.0)node{6}; \draw(7.0,0.0)node{7}; \draw(8.0,0.0)node{$\bullet$}; \draw(9.0,0.0)node{9}; \draw(10.0,0.0)node{10};
\draw(-0.2,-0.5)node{$\sigma$};
\draw(1.0,-0.5)node{10}; \draw(2.0,-0.5)node{1}; \draw(3.0,-0.5)node{2}; \draw(4.0,-0.5)node{3}; \draw(5.0,-0.5)node{4}; \draw(6.0,-0.5)node{5}; \draw(7.0,-0.5)node{6}; \draw(8.0,-0.5)node{7}; \draw(9.0,-0.5)node{8}; \draw(10.0,-0.5)node{9};
\draw(-1.4,-2.0)node{(b)};
\draw(-0.2,-1.5)node{$x$};
\draw(1.0,-1.5)node{$\bullet$}; \draw(2.0,-1.5)node{$\bullet$}; \draw(3.0,-1.5)node{3}; \draw(4.0,-1.5)node{$\bullet$}; \draw(5.0,-1.5)node{$\bullet$}; \draw(6.0,-1.5)node{6}; \draw(7.0,-1.5)node{7}; \draw(8.0,-1.5)node{$\bullet$}; \draw(9.0,-1.5)node{9}; \draw(10.0,-1.5)node{10};
\draw(-0.2,-2.0)node{$\sigma$};
\draw(1.0,-2.0)node{1}; \draw(2.0,-2.0)node{2}; \draw(3.0,-2.0)node{3}; \draw(4.0,-2.0)node{4}; \draw(5.0,-2.0)node{5}; \draw(6.0,-2.0)node{7}; \draw(7.0,-2.0)node{6}; \draw(8.0,-2.0)node{8}; \draw(9.0,-2.0)node{9}; \draw(10.0,-2.0)node{10};
\draw(-0.2,-2.5)node{$\rho$};
\draw(3.0,-2.5)node[rectangle,draw=black,fill=black!10,minimum height=4.5mm, minimum width=9mm]{};
\draw(9.5,-2.5)node[rectangle,draw=black,fill=black!10,minimum height=4.5mm, minimum width=19mm]{};
\draw(1.0,-2.5)node{1}; \draw(2.0,-2.5)node{2}; \draw(3.0,-2.5)node{10}; \draw(4.0,-2.5)node{4}; \draw(5.0,-2.5)node{5}; \draw(6.0,-2.5)node{7}; \draw(7.0,-2.5)node{6}; \draw(8.0,-2.5)node{8}; \draw(9.0,-2.5)node{3}; \draw(10.0,-2.5)node{9};
\draw(-1.4,-4.0)node{(c)};
\draw(-0.2,-3.5)node{$x$};
\draw(1.0,-3.5)node{$\bullet$}; \draw(2.0,-3.5)node{$\bullet$}; \draw(3.0,-3.5)node{3}; \draw(4.0,-3.5)node{$\bullet$}; \draw(5.0,-3.5)node{$\bullet$}; \draw(6.0,-3.5)node{6}; \draw(7.0,-3.5)node{7}; \draw(8.0,-3.5)node{$\bullet$}; \draw(9.0,-3.5)node{9}; \draw(10.0,-3.5)node{10};
\draw(-0.2,-4.0)node{$\sigma$};
\draw(1.0,-4.0)node{1}; \draw(2.0,-4.0)node{2}; \draw(3.0,-4.0)node{3}; \draw(4.0,-4.0)node{10}; \draw(5.0,-4.0)node{4}; \draw(6.0,-4.0)node{5}; \draw(7.0,-4.0)node{6}; \draw(8.0,-4.0)node{7}; \draw(9.0,-4.0)node{8}; \draw(10.0,-4.0)node{9};
\draw(-0.2,-4.5)node{$\rho$};
\draw(3.0,-4.5)node[rectangle,draw=black,fill=black!10,minimum height=4.5mm, minimum width=9mm]{};
\draw(6.0,-4.5)node[rectangle,draw=black,fill=black!10,minimum height=4.5mm, minimum width=9mm]{};
\draw(1.0,-4.5)node{1}; \draw(2.0,-4.5)node{2}; \draw(3.0,-4.5)node{5}; \draw(4.0,-4.5)node{10}; \draw(5.0,-4.5)node{4}; \draw(6.0,-4.5)node{3}; \draw(7.0,-4.5)node{6}; \draw(8.0,-4.5)node{7}; \draw(9.0,-4.5)node{8}; \draw(10.0,-4.5)node{9};
\draw(-1.4,-6.25)node{(d)};
\draw(-0.2,-5.5)node{$x$};
\draw(1.0,-5.5)node{$\bullet$}; \draw(2.0,-5.5)node{$\bullet$}; \draw(3.0,-5.5)node{3}; \draw(4.0,-5.5)node{$\bullet$}; \draw(5.0,-5.5)node{$\bullet$}; \draw(6.0,-5.5)node{$\bullet$}; \draw(7.0,-5.5)node{$\bullet$}; \draw(8.0,-5.5)node{$\bullet$}; \draw(9.0,-5.5)node{$\bullet$}; \draw(10.0,-5.5)node{$\bullet$};
\draw(-0.2,-6.0)node{$\sigma$};
\draw(1.0,-6.0)node{1}; \draw(2.0,-6.0)node{2}; \draw(3.0,-6.0)node{3}; \draw(4.0,-6.0)node{4}; \draw(5.0,-6.0)node{5}; \draw(6.0,-6.0)node{6}; \draw(7.0,-6.0)node{7}; \draw(8.0,-6.0)node{8}; \draw(9.0,-6.0)node{9}; \draw(10.0,-6.0)node{10};
\draw(-0.2,-6.5)node{$\rho_1$};
\draw(1.0,-6.5)node[rectangle,draw=black,fill=black!10,minimum height=4.5mm, minimum width=9mm]{};
\draw(3.0,-6.5)node[rectangle,draw=black,fill=black!10,minimum height=4.5mm, minimum width=9mm]{};
\draw(1.0,-6.5)node{3}; \draw(2.0,-6.5)node{2}; \draw(3.0,-6.5)node{1}; \draw(4.0,-6.5)node{4}; \draw(5.0,-6.5)node{5}; \draw(6.0,-6.5)node{6}; \draw(7.0,-6.5)node{7}; \draw(8.0,-6.5)node{8}; \draw(9.0,-6.5)node{9}; \draw(10.0,-6.5)node{10};
\draw(-0.2,-7.0)node{$\rho_2$};
\draw(2.5,-7.0)node[rectangle,draw=black,fill=black!10,minimum height=4.5mm, minimum width=19mm]{};
\draw(1.0,-7.0)node{1}; \draw(2.0,-7.0)node{3}; \draw(3.0,-7.0)node{2}; \draw(4.0,-7.0)node{4}; \draw(5.0,-7.0)node{5}; \draw(6.0,-7.0)node{6}; \draw(7.0,-7.0)node{7}; \draw(8.0,-7.0)node{8}; \draw(9.0,-7.0)node{9}; \draw(10.0,-7.0)node{10};
\end{tikzpicture}
\end{center}
\caption{Illustrating the cases considered by $\mathrm{infoP}$.
Panel (a): query $\sigma$ does not coincide with the partial solution $x$;
$\mathrm{infoP}(\sigma,x,y)=\mathrm{info}(\sigma,y)$.
Panel (b): $\sigma$ and $x$ coincide in more than one position;
$\rho$ rearranges these positions of $\sigma$;
$\mathrm{infoP}(\sigma,x,y)=\mathrm{info}(\rho,y)$.
Panel (c): $\sigma$ and $x$ coincide in exactly one position $i$,
but more positions are identified already;
$\rho$ is obtained from $\sigma$ by swapping position $i$
with another identified position $j$;
$\mathrm{infoP}(\sigma,x,y)=\mathrm{info}(\rho,y)$.
Panel (d): Exactly one position is identified and
appears to be correct in $\sigma$;
two modified queries $\rho_1$ and $\rho_2$ must be defined,
each by swapping the identified position with another one;
$\mathrm{infoP}(\sigma,x,y)=\mathrm{no}$ if and only if
$\mathrm{info}(\rho_1,y)=\mathrm{info}(\rho_2,y)=\mathrm{no}$.
\label{fig:1}}
\end{figure}
\end{example}

The codebreaker strategy that identifies the secret code $y$ has two phases.
In the first phase codebreaker guesses an initial sequence
of $n$ queries that has a predefined structure.
In the second phase, the structure of the initial sequence
and the corresponding information by codemaker
enable us to identify correct \emph{positions} $y_i$ of the secret code
one after another, each by using a binary search.
We denote the vector $x$ restricted to the set
$\{s,\dots,\ell\}$ with ${(x_i)}_{i=s}^{\ell}$, $s,\ell\in [n]$.\par
\textbf{\bf Phase 1} Consider the $n$ queries, $\sigma^1,\dots,\sigma^n$,
that are defined as follows: $\sigma^1$ represents the identity map
and for $j\in[n-1]$, we obtain $\sigma^{j+1}$ from $\sigma^{j}$
by a circular shift to the right.
For example, if $n=4$, we have $\sigma^1=(1,2,3,4)$, $\sigma^2=(4,1,2,3)$,
$\sigma^3=(3,4,1,2)$ and $\sigma^4=(2,3,4,1)$.
The codebreaker guesses $\sigma^1,\dots,\sigma^{n}$.\par

\textbf{Phase 2.} Now,
codebreaker identifies the values of $y$ one after another,
using a binary search procedure, that we call findNext.
The idea is to exploit the information that for $1\leq i,j\leq n-1$
we have $\sigma^{j}_{i}=\sigma^{j+1}_{i+1}$,
$\sigma^{n}_{i}=\sigma^{1}_{i+1}$, $\sigma^{j}_{n}=\sigma^{j+1}_{1}$
and $\sigma^{n}_{n}=\sigma^{1}_{1}$.
findNext is used to identify the second correct position
to the last correct position in the main loop of the algorithm.

After the first position of $y$ has been found and fixed in $x$,
there exists a $j\in [n]$ such that $\mathrm{infoP}(\sigma^j,x,y)=\mathrm{no}$.
As long as we have open positions in $x$, we can either find a $j\in[n-1]$
with $\mathrm{infoP}(\sigma^j,x,y)=\mathrm{yes}$
but $\mathrm{infoP}(\sigma^{j+1},x,y)=\mathrm{no}$ 
and set $r:=j+1$, or we have $\mathrm{infoP}(\sigma^n,x,y)=\mathrm{yes}$
but $\mathrm{infoP}(\sigma^1,x,y)=\mathrm{no}$ and set $j:=n$ and $r:=1$.
We call such an index $j$ an \emph{active} index.
Let $j$ be an active index and $r$ its related index.
Let $c$ be the color of some position of $y$ that is already identified
and fixed in the partial solution $x$.
With $\ell_{j}$ and $\ell_{r}$ we denote the position of color $c$
in $\sigma^{j}$ and $\sigma^{r}$, respectively.
The color $c$ serves as a pivot color for identifying
a correct position $m$ in $\sigma^j$ that is not fixed, yet.
There are two possible modes for the binary search
that depend on the fact if $m\le \ell_j$.
The mode is indicated by a Boolean variable $\mathrm{leftS}$
and determined by lines 5--9 of findNext.
Clearly, $m\le \ell_j$ if $\ell_j=n$.
Otherwise, codebreaker guesses
\[
	\sigma^{j,0}:=\left(c,{(\sigma^{j}_i)}_{i=1}^{\ell_{j}-1},{(\sigma^{j}_i)}_{i=\ell_{j}+1}^{n}\right).
\]
By the information $\sigma^{j}_{i}=\sigma^{r}_{i+1}$
we obtain that
${(\sigma^{j}_i)}_{i=1}^{\ell_{j}-1}\equiv {(\sigma^{r}_{i})}_{i=2}^{\ell_{j}}$.
We further know that every open color has a wrong position in $\sigma^r$.
For that reason, $\mathrm{infoP}(\sigma^{j,0},x,y)=\mathrm{no}$
implies that $m\le \ell_j$.
\begin{algorithm}[h]
\SetKwInOut{Input}{input}\SetKwInOut{Output}{output}
\Input{Secret code $y$, partial solution $x\ne 0$ and an active index $j\in[n]$}
\Output{A correct open position in $\sigma^j$}
\lIf{$j=n$}{$r:=1$}\lElse{$r:=j+1$}
Choose the color $c$ of some non-zero position of $x$\;
Let $\ell_j$ and $\ell_r$ be the positions with color $c$ in $\sigma^j$ and $\sigma^r$, respectively\;
\lIf{$\ell_j=n$}{$\mathrm{leftS:=true}$}
\Else%
{
	$\sigma^{j,0}:=\left(c,{(\sigma^{j}_{i})}_{i=1}^{\ell_j-1},{(\sigma^{j}_{i})}_{i=\ell_j+1}^{n}  \right)$\;
	\lIf{$\mathrm{infoP}(\sigma^{j,0},x,y)$}{$\mathrm{leftS:=false}$}
	\lElse{$\mathrm{leftS:=true}$}
}
\If{$\mathrm{leftS}$}
{
    $a:=1$\;
    $b:=\ell_j$\;
}
\Else%
{
    $a:=\ell_r$\;
    $b:=n$\;
}
\While{$b>a$}
{	
$\ell:=\lceil \frac{a+b}{2} \rceil$\tcp*{position for color $c$}
\lIf{$\mathrm{leftS}$}{$\sigma^{j,\ell}:=\left({(\sigma^{j}_{i})}_{i=1}^{\ell-1},c,{(\sigma^{r}_{i})}_{i=l+1}^{\ell_j},{(\sigma^{j}_{i})}_{i=\ell_j+1}^{n}\right)$}
\lElse{$\sigma^{j,\ell}:=\left({(\sigma^{r}_{i})}_{i=1}^{\ell_r-1},{(\sigma^{j}_{i})}_{i=\ell_r}^{\ell-1},c,{(\sigma^{r}_{i})}_{i=l+1}^{n}\right)$}
	\lIf{$\mathrm{infoP}(\sigma^{j,\ell},x,y)$}{$b:=l-1$}
	\lElse{$a:=l$}
}
\textbf{return} $b$\;
\caption{Function findNext}\label{findNext}
\end{algorithm}
The binary search for the exact value of $m$ is done in the interval $[a,b]$,
where $m$ is initialized as $n$ and $[a,b]$ as
\[
[a,b]:=\begin{cases}
[1,\ell_j] & \text{if}\;\,\mathrm{leftS}\\
[\ell_r,n] & \text{else}
\end{cases}
\]
(lines 10--15 of findNext).
In order to determine if there is an open correct position
on the left side of the current center $\ell$ of $[a,b]$ in $\sigma^j$
we can define a case dependent query:
\[
\sigma^{j,\ell}:=\begin{cases}
	\left({(\sigma^{j}_{i})}_{i=1}^{\ell-1},c,{(\sigma^{r}_{i})}_{i=l+1}^{\ell_j},{(\sigma^{j}_{i})}_{i=\ell_j+1}^{n}\right) & \hspace{-3mm}\text{if}\;\,\mathrm{leftS}\\
	\left({(\sigma^{r}_{i})}_{i=1}^{\ell_r-1},{(\sigma^{j}_{i})}_{i=\ell_r}^{\ell-1},c,{(\sigma^{r}_{i})}_{i=l+1}^{n}\right) & \,\text{ else}
\end{cases}
\]
In the first case, the first $\ell-1$ positions of $\sigma^{j,\ell}$
coincide with those of $\sigma^j$.
The remaining positions of $\sigma^{j,\ell}$
cannot coincide with the corresponding positions of the secret code
if they have not been fixed, yet.
This is because the $\ell$-th position of $\sigma^{j,\ell}$
has the already fixed value $c$,
positions $\ell+1$ to $\ell_j$
coincide with the corresponding positions of $\sigma^r$
which satisfies $\mathrm{infoP}(\sigma^r,x,y)=\mathrm{no}$
and the remaining positions have been checked to be wrong in this case
(cf.\ former definition of $\mathrm{leftS}$ in line 5 and line 9, respectively).
Thus, there is a correct open position on the left side of $\ell$ in $\sigma^j$,
if and only if $\mathrm{infoP}(\sigma^{j,\ell},x,y)=\mathrm{yes}$.
In the second case, the same holds for similar arguments.
Now, if there is a correct open position to the left of $\ell$,
we update the binary search interval $[a,b]$ by $[a,\ell-1]$.
Otherwise, we update $[a,b]$ by $[\ell,b]$.
\begin{example}
Suppose, that for $n=10$ the secret code $y$ and the partial solution $x$
are given as in the top panel of Figure~\ref{fig:2}
and that we have first identified the position with color $1$,
such that $1$ is our pivot color.
The initial $10$ queries $\sigma^1,\dots,\sigma^{10}$ 
together with their current $\mathrm{infoP}$ measures
are given in the mid panel of Figure~\ref{fig:2}.
We see, that the highlighted queries, $\sigma^4$ and $\sigma^5$,
can be used for the binary search with findNext,
since $\sigma^4$ has a correct not yet identified position
but $\sigma^5$ has not.
So the active indices are $j=4$ and $r=5$ and the corresponding
pivot color positions in $\sigma^4$ and $\sigma^5$ are
$\ell_j=4$ and $\ell_r=5$.
The first query of findNext (cf.\ lower panel of Figure~\ref{fig:2})
is $\sigma^a$.
It begins with the pivot color,
followed by the first $3$ positions of $\sigma^4$
(positions $2$ to $4$ of $\sigma^5$)
and positions $5$ to $10$ of $\sigma^4$ (cf.\ line 7 of findNext).
Since $\mathrm{infoP}(\sigma^a,x,y)=\mathrm{yes}$,
the left most correct but unidentified position in $\sigma^4$
is none of its first $4$ positions.
Thus, the binary search is continued in the interval $[5,10]$.
It is realized by queries $\sigma^b$, $\sigma^c$, and $\sigma^d$,
which are composed according to line 20 of findNext (in this case),
and finally identifies position $8$ with color $5$ of the secret code
(generally the position left to the left most pivot color position
that receives the answer \enquote{yes} in the binary search).
\begin{figure}
\begin{center}
\begin{tikzpicture}
\draw(-1.4,0.25) node{(a)};
\draw(-0.2,0.5) node{$y$};
\draw(-0.2,0.0) node{$x$};
\draw(1.0,0.5)node{9}; \draw(2.0,0.5)node{10}; \draw(3.0,0.5)node{6}; \draw(4.0,0.5)node{8}; \draw(5.0,0.5)node{4}; \draw(6.0,0.5)node{2}; \draw(7.0,0.5)node{7}; \draw(8.0,0.5)node{5}; \draw(9.0,0.5)node{1}; \draw(10.0,0.5)node{3};
\draw(1.0,0.0)node{$\bullet$}; \draw(2.0,0.0)node{$\bullet$}; \draw(3.0,0.0)node{6}; \draw(4.0,0.0)node{8}; \draw(5.0,0.0)node{$\bullet$}; \draw(6.0,0.0)node{2}; \draw(7.0,0.0)node{$\bullet$}; \draw(8.0,0.0)node{$\bullet$}; \draw(9.0,0.0)node{1}; \draw(10.0,0.0)node{3};
\draw(-1.4,-3.5) node{(b)};
\draw(5.5,-3.0)node[rectangle,draw=black,fill=black!30,minimum height=4.5mm, minimum width=100mm]{};
\draw(5.5,-3.5)node[rectangle,draw=black,fill=black!10,minimum height=4.5mm, minimum width=100mm]{};
\foreach \y in {1,2,...,10}
{
	\draw(-0.2,-1-0.5*\y) node{$\sigma^{\y}$};
	\foreach \x in {1,2,...,10}
	{
		\pgfmathtruncatemacro{\z}{mod(int(10+\x-\y),10)+1}%
		\draw(\x,-1-0.5*\y) node{\z};
	}
}
\draw(11.2,-1.5) node{yes};
\draw(11.2,-2.0) node{yes};
\draw(11.2,-2.5) node{yes};
\draw(11.2,-3.0) node{yes};
\draw(11.2,-3.5) node{no};
\draw(11.2,-4.0) node{no};
\draw(11.2,-4.5) node{no};
\draw(11.2,-5.0) node{no};
\draw(11.2,-5.5) node{no};
\draw(11.2,-6.0) node{no};
\draw(5.5,-1.0) node{initial queries};
\draw(11.2,-1.0) node{$\mathrm{infoP}$};
\draw(-1.4,-8.0) node{(c)};
\draw(-0.2,-7.5) node{$\sigma^a$};
\draw(-0.2,-8.0) node{$\sigma^b$};
\draw(-0.2,-8.5) node{$\sigma^c$};
\draw(-0.2,-9.0) node{$\sigma^d$};
\draw(3.0,-7.5)node[rectangle,draw=black,fill=black!10,minimum height=4.5mm, minimum width=29mm]{};
\draw(7.5,-7.5)node[rectangle,draw=black,fill=black!30,minimum height=4.5mm, minimum width=59mm]{};
\draw(1.0,-7.5)node{1}; \draw(2.0,-7.5)node{8}; \draw(3.0,-7.5)node{9}; \draw(4.0,-7.5)node{10}; \draw(5.0,-7.5)node{2}; \draw(6.0,-7.5)node{3}; \draw(7.0,-7.5)node{4}; \draw(8.0,-7.5)node{5}; \draw(9.0,-7.5)node{6}; \draw(10.0,-7.5)node{7};
\draw(2.5,-8.0)node[rectangle,draw=black,fill=black!10,minimum height=4.5mm, minimum width=39mm]{};
\draw(5.5,-8.0)node[rectangle,draw=black,fill=black!30,minimum height=4.5mm, minimum width=19mm]{};
\draw(9.0,-8.0)node[rectangle,draw=black,fill=black!10,minimum height=4.5mm, minimum width=29mm]{};
\draw(1.0,-8.0)node{7}; \draw(2.0,-8.0)node{8}; \draw(3.0,-8.0)node{9}; \draw(4.0,-8.0)node{10}; \draw(5.0,-8.0)node{2}; \draw(6.0,-8.0)node{3}; \draw(7.0,-8.0)node{1}; \draw(8.0,-8.0)node{4}; \draw(9.0,-8.0)node{5}; \draw(10.0,-8.0)node{6};
\draw(2.5,-8.5)node[rectangle,draw=black,fill=black!10,minimum height=4.5mm, minimum width=39mm]{};
\draw(6.0,-8.5)node[rectangle,draw=black,fill=black!30,minimum height=4.5mm, minimum width=29mm]{};
\draw(9.5,0-8.5)node[rectangle,draw=black,fill=black!10,minimum height=4.5mm, minimum width=19mm]{};
\draw(1.0,-8.5)node{7}; \draw(2.0,-8.5)node{8}; \draw(3.0,-8.5)node{9}; \draw(4.0,-8.5)node{10}; \draw(5.0,-8.5)node{2}; \draw(6.0,-8.5)node{3}; \draw(7.0,-8.5)node{4}; \draw(8.0,-8.5)node{1}; \draw(9.0,-8.5)node{5}; \draw(10.0,-8.5)node{6};
\draw(2.5,-9.0)node[rectangle,draw=black,fill=black!10,minimum height=4.5mm, minimum width=39mm]{};
\draw(6.5,-9.0)node[rectangle,draw=black,fill=black!30,minimum height=4.5mm, minimum width=39mm]{};
\draw(10.0,-9.0)node[rectangle,draw=black,fill=black!10,minimum height=4.5mm, minimum width=9mm]{};
\draw(1.0,-9.0)node{7}; \draw(2.0,-9.0)node{8}; \draw(3.0,-9.0)node{9}; \draw(4.0,-9.0)node{10}; \draw(5.0,-9.0)node{2}; \draw(6.0,-9.0)node{3}; \draw(7.0,-9.0)node{4}; \draw(8.0,-9.0)node{5}; \draw(9.0,-9.0)node{1}; \draw(10.0,-9.0)node{6};
\draw(11.2,-7.5) node{yes};
\draw(11.2,-8.0) node{no};
\draw(11.2,-8.5) node{no};
\draw(11.2,-9.0) node{yes};
\draw(5.5,-7.0) node{queries of $\mathrm{findNext}$};
\draw(11.2,-7.0) node{infoP};
\end{tikzpicture}
\end{center}
\caption{Panel (a): secret code $y$ and partial solution vector $x$.
Panel (b): the initial queries $\sigma^j$
and their responses $\mathrm{infoP}(\sigma^j,x,y)$,
indicating if a query and the secret code coincide in any position 
that has not been identified, yet (i.e., in any $0$-position of $x$).
Panel (c): binary search queries to identify the next secret position.
The highlighted subsequences correspond to the subsequences
of the initial queries that have been selected to apply the binary search.
\label{fig:2}}
\end{figure}
\end{example}

{\bf The Main Algorithm.} The main algorithm is outlined as Algorithm~\ref{findAll}.
\begin{algorithm}
Initialize $x:=(0,0,\dots,0)$\;
Guess the queries $\sigma^i$, $i\in[n-1]$\;
Initialize $v\in{\{\mathrm{yes,no}\}}^n$ by $v_i:=\mathrm{info}(\sigma^i,y)$, $i\in[n]$\;
\If{$v_i=\mathrm{yes}$ $\forall i\in[n]$}
{
	Find position $m$ with a correct color in $\sigma^1$ by at most $\lfloor\frac{n}{2}\rfloor+1$ queries\;
	$x_m:=\sigma^1_m$\;
	$v_1:=\mathrm{no}$\;
}
\While{$|\{i\in[n]\,|\,x_i=0\}|>2$}
{
	Use $v$ to choose an active index $j\in[n]$\tcp*{($v_j=\mathrm{yes}$, $v_{j+1}=\mathrm{no}$)}
	$m:=\mathrm{findNext}(x,y,j)$\;
	$x_m:=\sigma^{j}_m$\;
	$v_j:=\mathrm{infoP}(\sigma^j,x,y)$\;
}
Make at most two more queries to find the remaining two unidentified colors\;
\caption{Codebreaker Strategy for Permutations}\label{findAll}
\end{algorithm}
It starts with an empty partial solution
and finds the positions of the secret code $y$ one-by-one.
The vector $v$ keeps record about which of the initial queries $\sigma^1,\dots,\sigma^n$ 
coincides with the secret code $y$ in some open position.
Thus, $v$ is initialized by $v_i:=\mathrm{info}(\sigma^i,y)$, $i\in[n]$.
The main loop always requires an active index.
For that reason, if $v_i=\mathrm{yes}$ for all $i\in[n]$ in the beginning,
we first identify the correct position in $\sigma^1$
(which is unique in this case)
by $\lfloor\frac{n}{2}\rfloor+1$ queries
(each swapping two positions of $\sigma^1$)
and update $x$ and $v$, correspondingly.
After this step, there will always exist an active index.
Every call of findNext in the main loop augments $x$
by a correct solution value.
One call of findNext
requires at most $1+\lceil\log_2 n\rceil$ queries
if the partial solution $x$ contains more than one non-zero position,
and at most $2+2\lceil\log_2 n\rceil$ queries
(two queries for each call of infoP) if $x$ has exactly one non-zero position.
Thus, Algorithm~\ref{findAll} does not need more than
$(n-2)\lceil\log_2 n\rceil+\frac{5}{2}n-1$ queries to break the secret code
inclusive the $n-1$ initial queries,
$\lfloor\frac{n}{2}\rfloor+1$ queries to find the first correct position,
$n-3$ calls of findNext and $2$ final queries.\\

\textbf{The case $k>n$:}
Let $y=(y_1,\dots,y_n)$ be the code that must be found.
We use the same notations as above.

\textbf{Phase 1.} Consider the
$k$ queries $\overline{\sigma}^1,\dots,\overline{\sigma}^k$,
where $\overline{\sigma}^1$ represents the identity map on $[k]$
and for $j\in[k-1]$, we obtain $\overline{\sigma}^{j+1}$
from $\overline{\sigma}^{j}$ by a circular shift to the right.
We define $k$ codes $\sigma^1,\dots,\sigma^k$
by $\sigma^j={(\overline{\sigma}^j_i)}_{i=1}^n$, $j\in[k]$.
For example, if $k=5$ and $n=3$, we have $\sigma^1=(1,2,3)$,
$\sigma^2=(5,1,2)$, $\sigma^3=(4,5,1)$, $\sigma^4=(3,4,5)$
and $\sigma^5=(2,3,4)$.
Within those $k$ codes, every color appears exactly once at every position
and, thus, there are at least $k-n$ initial queries
that do not contain any correct position.
Since $k>n$, this implies
\begin{lemma}\label{exists0}
There is a $j\in[k]$ with $\mathrm{info}(\sigma^j,y)=\mathrm{no}$.
\end{lemma}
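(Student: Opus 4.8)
The plan is to prove Lemma~\ref{exists0} by a direct counting (pigeonhole) argument that exploits the regular structure of the shifted queries. The crucial structural fact is that, for every fixed position $i\in[n]$, the column of colors $\sigma^1_i,\sigma^2_i,\dots,\sigma^k_i$ is a permutation of $[k]$. Indeed, since each $\overline{\sigma}^{j+1}$ arises from $\overline{\sigma}^{j}$ by a circular right shift on $[k]$ and $\overline{\sigma}^1$ is the identity, one checks that $\overline{\sigma}^j_i\equiv i-j+1\pmod{k}$ (taking representatives in $[k]$); as $j$ runs through $[k]$ these values form $k$ consecutive residues and hence exhaust $[k]$ exactly once. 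Restricting to the first $n$ coordinates, the same holds for $\sigma^j_i=\overline{\sigma}^j_i$ with $i\in[n]$.

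First I would fix the secret $y=(y_1,\dots,y_n)$ and, for each position $i\in[n]$, invoke this column property to obtain a \emph{unique} index $j_i\in[k]$ with $\sigma^{j_i}_i=y_i$. Thus the total number of coincidences, counted over all $k$ queries and all $n$ positions, is exactly $n$: each position contributes precisely one matching query, and no position can contribute more.

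Next I would translate this into a bound on the number of ``yes''-queries. By definition, $\mathrm{info}(\sigma^j,y)=\mathrm{yes}$ holds exactly when $\sigma^j$ coincides with $y$ in at least one position, i.e.\ when $j\in\{j_1,\dots,j_n\}$. Hence the set of indices giving a ``yes''-answer is contained in $\{j_1,\dots,j_n\}$ and so has cardinality at most $n$. Consequently at least $k-n$ of the $k$ queries satisfy $\mathrm{info}(\sigma^j,y)=\mathrm{no}$, and since $k>n$ we have $k-n\ge 1$, which produces the desired index $j$ and completes the proof.

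I do not expect a genuine obstacle here; the argument is essentially the counting already announced in the sentence preceding the lemma. The only point requiring a little care is verifying the column-permutation claim rigorously from the circular-shift definition, in particular handling the wrap-around at position $1$; once that is in place, the pigeonhole conclusion is immediate.
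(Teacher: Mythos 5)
Your proposal is correct and takes essentially the same route as the paper: the paper justifies the lemma by observing that among the $k$ shifted queries every color appears exactly once at every position, so at most $n$ queries can contain a correct position and hence at least $k-n\ge 1$ must receive the answer no. Your write-up simply makes the column-permutation fact and the pigeonhole count explicit, which the paper leaves as a one-sentence remark before the lemma.
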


\textbf{Phase 2.} Having more colors than positions,
we can perform our binary search for a next
correct position without using a pivot color.
The corresponding simplified version of findNext
is outlined as Algorithm~\ref{findNext2}.
\begin{algorithm}
\SetKwInOut{Input}{input}\SetKwInOut{Output}{output}
\Input{Code $y$, partial solution $x\ne 0$ and an active index $j\in[k]$}
\Output{A position $m$ that is correct in $\sigma^j$}
\lIf{$j=n$}{$r:=1$}\lElse{$r:=j+1$}
$a:=1$, $b:=n$\;
\While{$b>a$}
{	
	$\ell:=\lceil \frac{a+b}{2} \rceil$\tcp*{mid position of current interval}
	Guess $\sigma:=\left({(\sigma^{r}_{i})}_{i=1}^{\ell-1},{(\sigma^{j}_{i})}_{i=\ell}^{n}\right)$\;
	$s:=\mathrm{infoP}(\sigma,x,y)$\;
	\lIf{$s=\mathrm{yes}$}{$a:=l$}
	\lElse{$b:=l-1$}
}
\textbf{return} $a$\;
\caption{Function findNext for $k>n$}\label{findNext2}
\end{algorithm}
Using that version of findNext also allows to simplify our main algorithm 
(Algorithm~\ref{findAll}) by adapting lines 2 and 3,
and, due to Lemma~\ref{exists0}, skipping lines 4--7, as findNext can
be already applied to find the first correct position.
Thus, for  the required number of queries to break the secret code we have:
the initial $k-1$ queries,
a call of the modified findNext for every but the last two positions
and one or two final queries.
This yields, that the modified Mastermind Algorithm
breaks the secret code in at most $(n-1)\lceil\log_2 n\rceil+k+1$ queries.
\end{proof}

\section{Conclusions}
We showed that deterministic algorithms
for the identification of a secret code in Black-Peg AB-Mastermind can,
with a slight modification, also be applied to Yes-No AB-Mastermind
and yields upper bounds on its query complexity.
Utilizing a simple codemaker cheating strategy,
we further derive corresponding lower bounds for Yes-No AB-Mastermind.
One challenge of this variant is that codemaker's answers are restricted
to the information whether query and secret code coincide in any position.
A bigger challenge with AB-Mastermind is
that no color repetition is allowed in a query
whereas most strategies for other Mastermind variants
exploit the property of color repetition.
While for most Mastermind variants there is a gap
between lower and upper bounds on the worst case number of queries
to break the secret code,
our results imply that this number is $\Theta(n\log n)$
for the most popular case $k=n$ of Yes-No AB-Mastermind,
which is also referred to as Yes-No Permutation-Mastermind.
To our knowledge, this result is a first exact asymptotic
query complexity proof for a multicolor Mastermind variant,
where both secret code and queries are chosen from the same set,
here $[n]^n$.

A future challenge will be studying the static variant of Yes-No AB-Mastermind
(where codebreaker must give all but one queries
in advance of codemaker's answers).
Lower and upper bounds for static Black-Peg AB-Mastermind were provided
as $\Omega(n\log n)$ and $\mathcal{O}(n^{1.525})$,
respectively~\cite{GJSS17}.

\vspace{3pt} 
\noindent
{\small \textbf{Codeavailability:} We provide Matlab/Octave implementations
of the codebreaker strategy via GitHub, a permanent version of which
is archived in a public zenodo repository~\cite{Sau20}.
}
\vspace{3pt}

\bibliographystyle{abbrv} 
\bibliography{references}
\end{document}